\newtheorem{theorem}{Theorem}[section]% meant for sectionwise numbers
\newtheorem{lemma}[theorem]{Lemma}
\newtheorem{claim}[theorem]{Claim}
\newtheorem{proposition}[theorem]{Proposition}% 
\newtheorem{example}{Example}%
\newtheorem{remark}{Remark}%
\newtheorem{protocol}{Protocol}
\newtheorem{definition}{Definition}
\title{A generalization of Burmester-Desmedt GKE based on a non-abelian finite group action}
\author[1]{Daniel Camaz\'on-Portela \footnote{The author was partially supported by grant PID2022-138906NB-C21 funded by MICIU/AEI/ 10.13039/501100011033 and by ERDF A way of making Europe.}}
\author[2]{\'Alvaro Otero-S\'anchez \footnote{The second author was partially supported by Grant FPU2024 funded by MICIU/AEI /10.13039/501100011033 and FSE+}}
\author[3]{Juan Antonio L\'opez-Ramos \footnote{The third author was partially supported by grant PID2022-140934OB-I00 funded by by MICIU/AEI/ 10.13039/501100011033 and by ERDF A way of making Europe.}}
\affil[1, 2, 3]{Department of Mathematics, University of Almer\'ia, Carretera Sacramento, SN, Almer\'ia, 04120, Spain}
\date{}                     %% if you don't need date to appear
\begin{document}
  \maketitle

%\date{}

%\begin{document}
%\maketitle

\begin{abstract}
The advent of large-scale quantum computers implies that our existing public-key cryptography infrastructure has become insecure. That means that the privacy of many mobile applications involving dynamic peer groups, such as multicast messaging or pay-per-view, could be compromised. In this work we propose a generalization of the well known group key exchange protocol proposed by Burmester and Desmedt to the non-abelian case by the use of finite group actions and we prove that the presented protocol is secure in Katz and Yung's model.
\end{abstract}

\section{Introduction}\label{sec1}

Nowadays, many mobile applications involve dynamic peer groups. Some examples of these include audio/video conferencing, multicast messaging, pay-per-view, and, in general, distributed applications. In order to protect such applications and preserve privacy, secure communication channels that link all the parties of a group must be employed. For achieving privacy, one can establish such channels by using symmetric encryption (symmetric cryptosystem, private key cryptosystem) with a group session key.

The aim of any group key exchange (GKE) protocol is to enable secure communication over an untrusted network by deriving and distributing shared keys between two or more parties. We should distinguish between two types of key exchange (KE) protocols commonly known as key transport protocol (GKT) and key agreement protocol (GKA). Key transport protocol is a key establishment protocol in which one of the principal generates the key and this key then is transferred to all protocol users, whereas key agreement protocol is a key establishment protocol in which the session key is a function of inputs by all protocol users.

The first GKA protocol based on the two-party Diffie-Hellman key agreement protocol was proposed by Ingemarsson, Tang in \cite{WongTang82}. This was followed
by the GKA protocols of Koyama and Ohta \cite{Koyama87}, Blundo et al. \cite{Blundo93}, and Burmester and Desmedt (BD) \cite{BurmesterDesmedt05}. Since then a large number of research papers on GKA and on securing GKA protocols have been presented due mainly to the distributed and dynamic nature of GKA and to the security challenges that have to be resolved. Most of these protocols combine aspects of the Diffie-Hellman key agreement with other cryptographic primitives to support security. In a group key exchange, the number of rounds is a crucial measure for evaluating the efficiency and to obtain a constant-round GKE protocol is considered as a minimum desirable requirement. Traditionally, the Burmester and Desmedt (BD) protocol has been widely known from its simplicity and small round complexity, just two rounds. Subsequently, Just and Vaudenay generalized in \cite{JustVaudenay96} the BD construction in which any two-party KE can be used for obtaining GKE. However, their description was sketchy and a rigorous security proof was not presented before \cite{KatzMoti07} and \cite{Boyd03}.

However, we know that in the advent of large-scale quantum computers, both the factoring and discrete logarithm problems can be efficiently solved, meaning that our existing public-key cryptography infrastructure has become insecure. The security of classical cryptographic schemes such as RSA, and Diffie-Hellman are based on the difficulty of factoring large integers and of finding discrete logarithms in finite cyclic groups, respectively. A quantum computer is able to solve the aforementioned problems attacking the security of these cryptographic algorithms. More precisely, Shor's algorithm factors discrete logarithm problems and Grover's algorithm can improve brute force attacks by significantly reducing search spaces for private keys. As a result, many researchers are now interested in
cryptography that is secure in a post-quantum world.

In the post-quantum setting, there exist two variants BD-type GKE protocols: one from lattices \cite{Apon19} and the other one from isogenies \cite{Furukawa18}. Apon et al. proposed in \cite{Apon19} a lattice-based BD-type GKE from the Ring-LWE (R-LWE) assumption, in which the authors elaborately adjusted the original
security proof to their new post-quantum setting. However, since the underlying R-LWE assumption depends on the number of group members, $n$, the size of data also
gets large depending on $n$. In \cite{Furukawa18} Furukawa et al. proposed an isogeny-based BD type GKE protocol called SIBD. However, the security proof of SIBD (\cite[Theorem 4]{Furukawa18}) is imperfect, and several points remain unclear, for example, on how to simulate some public variables.

Group theory is a broad and rich theory that models the technical tools used for the design and analysis in this research.
Some of the candidates for post-quantum cryptography (PQC) have been known for years, while others are still emerging.
Group theory, and in particular non-abelian groups, offers a rich supply of complex and varied problems for cryptography (see e.g. \cite{Vasco24}); reciprocally, the study of cryptographic algorithms built from these problems has contributed results to computational group theory. Motivated by this fact, and inspired by the the works \cite{Maze07}, \cite{Lopez17} and \cite{Gnilke24}, where a suite of protocols for group key management based on general semigroup and group actions are provided, we propose a generalization of the original BD GKE protocol by means of the use of non-abelian finite group actions. In section \ref{sec2} we recall some concepts and come well known results about group actions. The protocol, that has two more extra rounds than the original BD protocol due to the non-commutative setting, is presented in section \ref{sec3}. However, we may assume that any two consecutive participants, $U_{i}$ and $U_{i+1}$, share a private key before running the protocol as in \cite[Protocol SA-GDH.2]{Ateniese00}, so Round $1$ could be omitted and the number of extra rounds would be reduced to one. Section \ref{sec4} is devoted to present the security model that will be used in section \ref{sec5} for its analysis, where we prove that the proposed protocol verifies correctness and achieves forward secrecy by adapting the proof for the original BD protocol presented in \cite{KatzMoti07}. Finally, in the last section we propose some group actions based on hard group theoretic problems that could be used to implement our GKE protocol.

\section{Preliminaries on finite group actions} \label{sec2}
In this section we recall some concepts and well known result about finite group actions.

\begin{definition}\label{DefGroupAction}
Let $(H,\odot)$ be a finite group and $X$ be a finite set. An action of $H$ on $X$ as a set is defined by a morphism $\phi: H\times X\rightarrow X$ verifying that:
\begin{enumerate}
\item $\phi(e_{H},x)=x$,
\item $\phi(h_{2},\phi(h_{1},x))=\phi(h_{2}\odot h_{1},x)$.
\end{enumerate}
\end{definition}

\begin{definition}
For each element $x\in X$, the set $H(x)=\left\{\phi(h,x)\vert\enspace h\in H\right\}$ is called the orbit of $x$, and its stabilizer is defined as $H_{x}\equiv\left\{h\in H\vert\enspace \phi(h,x)=x\right\}$.
\end{definition}

The next result known as the fundamental lemma establishes a relation between the cardinal of an orbit $H(x)$ and the order of $H$ and $H_{x}$.

\begin{lemma}\cite[Lemma 1.2.1]{Kerber99}
The mapping $H(x)\rightarrow H/H_{x}: \phi(h,x)\rightarrow hH_{x}$ is a bijection between the orbit $H(x)$ of $x$ and the set of left cosets $H/H_{x}$.
\end{lemma}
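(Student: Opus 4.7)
The plan is to establish the claimed bijection by proving well-definedness, injectivity and surjectivity of the proposed map $\Phi:H(x)\to H/H_x$, $\Phi(\phi(h,x))=hH_x$. The natural strategy is to treat well-definedness and injectivity together, since both reduce to the single equivalence
\[
\phi(h_1,x)=\phi(h_2,x)\ \Longleftrightarrow\ h_1H_x=h_2H_x.
\]

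For the forward direction, I would start from $\phi(h_1,x)=\phi(h_2,x)$ and apply $\phi(h_2^{-1},\cdot)$ to both sides. By axiom (ii) of Definition \ref{DefGroupAction} the right side becomes $\phi(h_2^{-1}\odot h_2,x)=\phi(e_H,x)$, which equals $x$ by axiom (i), while the left side becomes $\phi(h_2^{-1}\odot h_1,x)$. Hence $h_2^{-1}\odot h_1\in H_x$, which is exactly $h_1H_x=h_2H_x$. The converse is the same computation run in reverse: if $h_2^{-1}\odot h_1\in H_x$, then $\phi(h_2^{-1}\odot h_1,x)=x$, and applying $\phi(h_2,\cdot)$ together with the two axioms yields $\phi(h_1,x)=\phi(h_2,x)$.

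Surjectivity is immediate, as every coset $hH_x\in H/H_x$ has the orbit element $\phi(h,x)$ as a preimage by construction.

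There is essentially no hard step here; the whole argument is a direct unpacking of the two action axioms, together with the fact that $H$ is a group so inverses are available. The only point worth being careful about is the order of composition in axiom (ii), so that $\phi(h_2^{-1},\phi(h_2,x))$ collapses to $\phi(e_H,x)$ rather than some other product; writing the equivalence as a single iff (instead of proving well-definedness and injectivity separately) avoids repeating this verification.
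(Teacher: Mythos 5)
Your proof is correct and is exactly the standard orbit--stabilizer argument; the paper itself gives no proof, simply citing \cite[Lemma 1.2.1]{Kerber99}, and your reduction of well-definedness and injectivity to the single equivalence $\phi(h_1,x)=\phi(h_2,x)\Leftrightarrow h_2^{-1}\odot h_1\in H_x$ is precisely the intended argument.
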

This result implies that the length of the orbit is the index of the stabilizer. In particular, since $H$ is finite, then $\left|H(x)\right|=\left|H\right|/\left|H_{x}\right|$.

To conclude this section, we introduce the concept of finite double action of a finite group $H$ on a finite set $X$.

\begin{definition}[Double group action]\cite[Definition 6.]{Habil19}
Let $(H,\odot)$ and $(J,\ast)$ be finite groups and $X$ a finite set. Let us suppose that $H$ and $J$ has an action on $X$ on the left $\phi: H\times X\rightarrow X$  and $\gamma: J\times X\rightarrow X$. For every $h\in H$, $j\in J$ and $x\in X$, if the interchange law $\phi(h,(\gamma(j,x)))=\gamma(j,\phi(h,x))$ holds, then we say that the pair ($H,J$) acts doubly on $X$ on the left by means of $\phi$ and $\gamma$. The action of ($H,J$) on $X$ defined by this way is called left double action. 
\end{definition}

\begin{example}
Let us consider two different actions of a finite group $(G,\cdot)$ on itself as follows:
\begin{align*}
\xymatrix@=0.5cm{
\phi: G\times G\ar[r] & G \\
(g_{1},g)\ar[r] & g_{1}\cdot g,}
%\end{equation*}
&
%\begin{equation*}
\xymatrix@=0.5cm{
\gamma: G\times G\ar[r] & G \\
(g_{2},g)\ar[r] & g\cdot g_{2}^{-1}.}
\end{align*}
Then, the group $(G,\cdot)$acts doubly on itself on the left by means of $\phi$ and $\gamma$.
\end{example}

\begin{remark}
Note that given an element $x\in X$ the notions of orbit and stabilizer can be extended to the case of double left actions.
\end{remark}

\section{A GKE protocol generalizing the BD protocol to the non-abelian case}\label{sec3}

We propose the following group key exchange protocol that generalizes the well-known Burmester-Desmedt Conference key Distribution System (see \cite{BurmesterDesmedt05}) to the non-commutative case.\\
Let $(G,\cdot)$ and $(H,\odot)$ be two finite groups. Then a center chooses an element $g\in G$ and publishes it. 

\begin{protocol}\label{Prot1}
Let $U_{1}, U_{2},...,U_{n}$ be a group of parties that want to generate a group key. \\
$\left[\textbf{Round 1}\right]$ Users $U_{i}$ and $U_{i+1 mod(n)}$, $i=1,...,n$ perform a key exchange and they get a shared private key $c_{i}\in H$. \\
$\left[\textbf{Round 2}\right]$ Each party $U_{i}$, $i=1,...,n$ selects an element $h_{i}\in H$ and sends $v_{i}=\phi(h_{i},g)$ to $i+1 mod(n)$ and $i-1 mod(n)$. \\
$\left[\textbf{Round 3}\right]$ Each party $U_{i}$, $i=1,...,n$ sends $w_{i}=\phi(c_{i-1}\odot h_{i},v_{i-1 mod(n)})$ to $i-1 mod(n)$. \\
$\left[\textbf{Round 4}\right]$ Each party $U_{i}$, $i=1,...,n$ computes $Y_{i}=\phi(c_{i}^{-1},w_{i+1})=\phi(h_{i+1 mod(n)}\odot h_{i},g)$, $X_{i}=\phi(h_{i},v_{i-1})=\phi(h_{i}\odot h_{i-1 mod(n)},g)$, and broadcast $Z_{i}=X_{i}^{-1}\cdot Y_{i}$. \\
$\left[\textbf{Key computation}\right]$ Each party $U_{i}$, $i=1,...,n$ computes the key
\begin{equation}
K_{i}=A_{\sigma^{i-1}(1)}^{i}\cdot A_{\sigma^{i-1}(2)}^{i}\cdot A_{\sigma^{i-1}(3)}^{i}\cdots A_{\sigma^{i-1}(n)}^{i},
\end{equation}
where
\begin{align*}
& A_{1}^{i}\equiv X_{i}=\phi(h_{i}\odot h_{i-1 mod(n)},g), \\
& A_{2}^{i}\equiv A_{1}^{i}\cdot Z_{i}=\phi(h_{i}\odot h_{i-1 mod(n)},g)\cdot\phi(h_{i}\odot h_{i-1 mod(n)},g)^{-1}\cdot Y_{i}, \\
& A_{3}^{i}\equiv A_{2}^{i}\cdot Z_{i+1 mod(n)}=\phi(h_{i+1 mod(n)}\odot h_{i},g)\cdot\phi(h_{i+1 mod(n)}\odot h_{i},g)^{-1}\cdot Y_{i+1 mod(n)}, \\
& ......, \\
& A_{n}^{i}\equiv A_{n-1}^{i}\cdot Z_{i+n-2 mod(n)}, \\
& =\phi(h_{i+n-2 mod(n)}\odot h_{i+n-3 mod(n)})\cdot\phi(h_{i+n-2 mod(n)}\odot h_{i+n-3 mod(n)},g)^{-1}\cdot Y_{i+n-2 mod(n)},
\end{align*}
and $\sigma\in S^{n}$ is the permutation $\sigma=(1,n,n-1,\ldots,2)$.
\end{protocol}

Note that in the commutative case, we can rid the protocol of Rounds $1$ and $3$ as well as of the permutation $\sigma$. As a consequence, in the commutative setting, it is enough for $(H,\odot)$ to have a semigroup structure.

\begin{example}[Commutative case]
Let $G=\mathbb{Z}_{p}$, $g$ an element of $G$, with $q=ord(g)$, and $H=\mathbb{Z}_{q}$. Then we can define a semigroup action $\phi: H\times G\rightarrow G$, where $\phi(h_{i},g)=g^{h_{i}}$. By running Protocol \ref{Prot1} we get $v_{i}=g^{h_{i}}$ and $Z_{i}=g^{h_{i+1}\odot h_{i}}\cdot (g^{h_{i}\odot h_{i-1}})^{-1}$, so the resulting group key is
\begin{equation*}
sk=g^{h_{1}\odot h_{n}}\cdot g^{h_{2}\odot h_{1}}\cdots g^{h_{n}\odot h_{n-1}},
\end{equation*}
and the original Burmester-Desmedt GKE protocol is obtained.
\end{example}

\section{The security model}\label{sec4}

We will use the security model developed from \cite{KatzMoti07} in order to formalize the security analysis that we are introducing in the next section. To this end we fix the notation and definitions necessary to such a formalization.

The (potential) participants in the protocol are modeled as probabilistic polynomial time (ppt) Turing machines in the finite set $\mathcal{U}=\{\mathcal{U}_1,\dots ,\mathcal{U}_n\}$ and every participant $\mathcal{U}_i$, $i=1, \dots ,n$ in the set $\mathcal{U}$ can run a polynomial amount of protocol instances in parallel. \\
The instance of participant $\mathcal{U}_i$ will be denoted as $\Pi_{U}^{i}$ and it is assigned with the following variables:
\begin{itemize}
\item $pid_{U}^{i}$ : stores the identities of the parties user $\mathcal{U}_i$ (including $\mathcal{U}_i$) aims at establishing a session key with variables assigned. 

\item $used_{U}^{i}$ : is a variable that indicates whether this instance is taking part in a protocol run.

\item $sk_{U}^{i}$ : is a variable that is initialized with a distinguished NULL value and will store the session key. 

\item $sid_{U}^{i}$ : is a variable that stores a non-secret session identifier to the session key stored in $sk_{U}^{i}$. 

\item $acc_{U}^{i}$ : is a variable that indicates whether the session key in $sk_{U}^{i}$ was accepted or not. 

\item $term_{U}^{i}$ : is a variable that indicates whether the protocol execution has finished.
\end{itemize}

Concerning the adversary's capabilities, since we are just considering passive adversaries, these are reduced to the following oracles:
\begin{itemize}
\item Execute($\mathcal{U}_1,i_1,\dots,\mathcal{U}_n,i_n$): in case the instances $\left\{\Pi_{U_{j}}^{i_{j}}\right\}$ have not yet been used, this oracle will return a transcript of a complete execution of the protocol among the specified instances.
\item Test($\mathcal{U},i$): this query is allowed only once, at any time during the adversary's execution. A random bit $b$ is generated; if $b=1$ the adversary is given $sk_{U}^{i}$, and if $b=0$ the adversary is given a random session key.
\end{itemize}

Following \cite{KatzMoti07} we recall the definitions of correctness and forward secrecy for a GKE protocol. 

\begin{definition}
A protocol is said to satisfy the correctness condition if for all $\mathcal{U}, \mathcal{U}^{'},i,j$ such that $sid_{U}^{i}=sid_{U^{'}}^{j}$, $pid_{U}^{i}=pid_{U^{'}}^{j}$, and $acc_{U}^{i}=acc_{U^{'}}^{j}=TRUE$, then $sk_{U}^{i}=sk_{U^{'}}^{j}\neq NULL$. 
\end{definition}

Regarding the security, we say that the event $Succ$ occurs if the adversary $\mathcal{A}$ queries the $Test$ oracle on an instance $\Pi_{U}^{i}$ for which $acc_{U}^{i}=TRUE$, and $\mathcal{A}$ correctly guesses the bit $b$ used by the $Test$ oracle in answering this query.

\begin{definition}
The advantage of $\mathcal{A}$ in attacking a protocol $P$ is defined as
\begin{equation}
Adv_{\mathcal{A},P}(k)\equiv\left|2\cdot Pr\left[Succ\right]-1\right|.
\end{equation}
We say that the protocol $P$ is a secure group key exchange ($KE$) protocol if it is secure against a passive adversary, that is, for any PPT passive adversary $\mathcal{A}$ it is the case that $Adv_{\mathcal{A},P}(k)$ is negligible. \\
To enable a concrete security analysis, we define $Adv_{P}^{KE-fs}(t,q_{ex})$ to be the maximum advantage of any passive adversary attacking $P$, running in time $t$, and making $q_{ex}$ calls to the $Execute$ oracle. We say that the protocol $P$ achieves forward secrecy if  $Adv_{P}^{KE-fs}(t,q_{ex})$ is negligible.
\end{definition}

\section{Security analysis of the proposed GKE protocol}\label{sec5}
We begin this section by proving that after running Protocol \ref{Prot1} all participants obtain the same shared key.

\begin{proposition}\label{ProGKEx}
The Protocol \ref{Prot1} verifies the correctness condition.  
\end{proposition}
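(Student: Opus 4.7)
The plan is to unwind the definitions step by step using only the two axioms of Definition \ref{DefGroupAction}, in order to show that all $n$ parties compute identical products. First, I would simplify the quantities transmitted in Rounds 2--4. Expanding $w_{i+1} = \phi(c_i \odot h_{i+1}, v_i) = \phi(c_i \odot h_{i+1} \odot h_i, g)$ via axiom (ii) of Definition \ref{DefGroupAction} yields
\[
Y_i = \phi(c_i^{-1}, w_{i+1}) = \phi(h_{i+1} \odot h_i, g),
\]
and analogously $X_i = \phi(h_i \odot h_{i-1}, g)$ (all indices modulo $n$). The key observation is therefore that $Y_i = X_{i+1}$, so that $Z_i = X_i^{-1}\cdot Y_i$ telescopes between consecutive common values.

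Setting $T_k := \phi(h_{k+1} \odot h_k, g)$ for $k = 1, \ldots, n$ (with $T_0 := T_n$), I would then prove by induction on $j$ that $A_j^i = T_{i+j-2 \bmod n}$. The base case is $A_1^i = X_i = T_{i-1}$. For the inductive step, using $Z_{i+j-1} = X_{i+j-1}^{-1} \cdot Y_{i+j-1} = T_{i+j-2}^{-1} \cdot T_{i+j-1}$, we obtain
\[
A_{j+1}^i = A_j^i \cdot Z_{i+j-1} = T_{i+j-2} \cdot T_{i+j-2}^{-1} \cdot T_{i+j-1} = T_{i+j-1},
\]
which closes the induction. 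Hence the sequence $(A_1^i, \ldots, A_n^i)$ is precisely the cyclic rotation of $(T_n, T_1, \ldots, T_{n-1})$ that starts at $T_{i-1}$.

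Finally, I would analyze $\sigma = (1, n, n-1, \ldots, 2)$, noting that it acts as the cyclic shift $j \mapsto j-1 \pmod n$, so that $\sigma^{i-1}(j) \equiv j - (i-1) \pmod n$. Substituting into the key formula yields
\[
A_{\sigma^{i-1}(j)}^i = T_{i + (j - i + 1) - 2 \bmod n} = T_{j-1 \bmod n},
\]
which no longer depends on $i$. Consequently
\[
K_i = T_n \cdot T_1 \cdot T_2 \cdots T_{n-1} = \phi(h_1 \odot h_n, g) \cdot \phi(h_2 \odot h_1, g) \cdots \phi(h_n \odot h_{n-1}, g)
\]
is the same value for every user, and correctness follows. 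I anticipate that the only real difficulty is the careful bookkeeping of cyclic indices: one has to verify that the specific permutation $\sigma$ is exactly the shift that undoes each party's local rotation of the canonical sequence $T_1, \ldots, T_n$, which is essential because $G$ is non-abelian and the order of multiplication matters.
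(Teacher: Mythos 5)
Your proposal is correct and follows essentially the same route as the paper: both arguments rest on the telescoping $Z_i = T_{i-1}^{-1}\cdot T_i$ and on the fact that $\sigma$ is the cyclic shift $j\mapsto j-1 \pmod n$ that undoes each user's local rotation of the sequence $T_1,\dots,T_n$. Your closed form $A_j^i = T_{i+j-2 \bmod n}$ in fact makes explicit the relations $A_l^{j+1}=A_{l+1}^{j}$ that the paper's induction on the user index asserts without verification, and yields the stronger factor-by-factor equality $A^i_{\sigma^{i-1}(j)}=T_{j-1}$ directly for all $i$.
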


\begin{proof}
It is clear that if $sid_{U}^{i}=sid_{U^{'}}^{j}$, $pid_{U}^{i}=pid_{U^{'}}^{j}$, and $acc_{U}^{i}=acc_{U^{'}}^{j}=TRUE$ then $U_{i}$ and $U_{j}^{'}$ have received the same $\left\{v_{\alpha}\right\}, \left\{w_{\alpha}\right\}, \left\{Z_{\alpha}\right\}$. Now, let us consider $\mathcal{U}_{i}\in\mathcal{U}$. Then each player $\mathcal{U}_{i}$ computes its $sk_{U}^{i}$, and we will prove by induction on $j$ that $sk_{U}^{i}=sk_{U}^{j}=\phi(h_{1}\odot h_{n},g)\cdot\phi(h_{2}\odot h_{1},g)\cdots\phi(h_{n}\odot h_{n-1},g)$ for any $i,j\in\left\{1, \ldots, n\right\}$, so in particular $sk_{U}^{i}=sk_{U}^{j}\neq NULL$. \\
To begin with, we have that $sk_{U}^{1}=\phi(h_{1}\odot h_{n},g)\cdot\phi(h_{2}\odot h_{1},g)\cdots\phi(h_{n}\odot h_{n-1},g)$. Let us suppose that 
\begin{align*}
sk_{U}^{j} & =A_{\sigma^{j-1}(1)}^{j}\cdot A_{\sigma^{j-1}(2)}^{j},A_{\sigma^{j-1}(3)}^{j},\ldots,A_{\sigma^{j-1}(n)}^{j}, \\
					 & =\phi(h_{1}\odot h_{n},g)\cdot\phi(h_{2}\odot h_{1},g)\cdots\phi(h_{n}\odot h_{n-1},g).
\end{align*}
Now, for $k=j+1$ the following relations are satisfied:
\begin{align*}
A_{1}^{k} & =A_{2}^{j},  \\
A_{2}^{k} & =A_{3}^{j},  \\
\vdots & =\vdots  \\
A_{n}^{k} & =A_{1}^{j}.  
\end{align*}
As a consequence, 
\begin{align*}
sk_{U}^{k} & =A_{\sigma^{k-1}(1)}^{k}\cdot A_{\sigma^{k-1}(2)}^{k},A_{\sigma^{k-1}(3)}^{k},\ldots,A_{\sigma^{k-1}(n)}^{k}, \\
           & =A_{\sigma^{j-1}\circ\sigma(1)}^{K}\cdot A_{\sigma^{j-1}\circ\sigma(2)}^{k},A_{\sigma^{j-1}\circ\sigma(3)}^{k},\ldots,A_{\sigma^{j-1}\circ\sigma(n)}^{k}, \\
					 & =A_{\sigma^{j-1}(1)}^{j}\cdot A_{\sigma^{j-1}(2)}^{j},A_{\sigma^{j-1}(3)}^{j},\ldots,A_{\sigma^{j-1}(n)}^{j},
\end{align*}
so $sk_{U}^{j+1}=sk_{U}^{j}=\phi(h_{1}\odot h_{n},g)\cdot\phi(h_{2}\odot h_{1},g)\cdots\phi(h_{n}\odot h_{n-1},g)$, and $sk_{U}^{i}=sk_{U}^{j}\neq NULL$ for any $i,j\in\left\{1,\ldots,n\right\}$.
\end{proof}

The security of the Protocol \ref{Prot1} is based on the decisional Diffie-Hellman for group actions (DDH-GA). Given $(G,\cdot)$, $(H,\odot)$ two finite groups, $\phi: H\times G\rightarrow G$ a finite group action as in Definition \ref{DefGroupAction}, and an element $g\in G$, the decisional Diffie-Hellman problem for a group action is to distinguish with non-negligible advantage between
the distributions ($\phi(x,g),\phi(y,g),\phi(y\odot x,g),\phi(x\odot y,g)$) and ($\phi(x,g),\phi(y,g),\phi(z,g),\phi(r,g)$), where $z,r\in H\setminus\left\{y\odot xH_{g},x\odot yH_{g}\right\}$ are chosen at random.\\
More formally, we define $Adv_{G}^{ddh-ga}(t)$ as the maximum value, over all distinguishing algorithms $B$ running in time at most $t$, of:
\begin{multline*}
\left|Pr\left[x,y\leftarrow H: B(\phi(x,g),\phi(y,g),\phi(y\odot x,g),\phi(x\odot y,g)=1\right]-\right. \\ \left. Pr\left[x,y\leftarrow H; z,r\leftarrow H\setminus\left\{y\odot xH_{g},x\odot yH_{g}\right\}:B(\phi(x,g),\phi(y,g),\phi(z,g),\phi(r,g))=1\right]\right|.
\end{multline*}
We say that a finite action $\phi: H\times G\rightarrow G$ satisfies the $DDH-GA$ assumption if $Adv_{G}^{ddh-ga}(t)$ is ``small'' for ``reasonable'' values of $t$.

Next, we prove that the Protocol \ref{Prot1} achieves forward security in the security model of the previous section. Our proof, although it is based on the proof of \cite[Theorem 2.]{KatzMoti07}, differs from the original one due to the need of adapting both the presented distributions and the factor measuring the statistical closeness to the case of non-abelian finite group actions.

\begin{theorem}
Given a finite action $\phi: H\times G\rightarrow G$ that satisfies the $DDH-GA$ assumption, the protocol proposed above \ref{Prot1} is a secure conference key distribution protocol achieving forward secrecy, that is,
\begin{equation*}
Adv_{P}^{KE-fs}(t,q_{ex})\leq 4\cdot Adv_{G}^{ddh-ga}(t^{'})+\frac{2\cdot q_{ex}\cdot \left|H_{g}\right|}{\left|H\right|},
\end{equation*}
where $t^{'}=t+\mathcal{O}(\left|\mathcal{P}\right|\cdot q_{ex}\cdot t_{\phi})$, $\mathcal{P}$ denotes the size of the set of potential participants and $t_{\phi}$ is the time to perform $\phi(h,g)$.
\end{theorem}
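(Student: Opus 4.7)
The plan is to follow the structure of Katz and Moti's proof of Theorem 2 in \cite{KatzMoti07}, adapting each step to the non-commutative group-action setting. I would introduce a hybrid experiment $\mathbf{P}_1$ identical to the real protocol execution, except that for every Execute query the oracle substitutes the ``non-abelian Diffie--Hellman'' values $\phi(h_{i+1}\odot h_{i},g)$ and $\phi(h_{i}\odot h_{i+1},g)$ (which appear as the ingredients of $Y_i$ and $X_i$) by elements drawn independently and uniformly at random from the orbit $H(g)$. In $\mathbf{P}_1$ the computed key is a product of $n$ independent uniform elements of $H(g)$; it is therefore itself uniform and indistinguishable from the random key returned by Test, so the adversary's advantage in $\mathbf{P}_1$ is exactly $0$.

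To bound $|\Pr[\mathrm{Succ}_{\mathbf{P}_0}]-\Pr[\mathrm{Succ}_{\mathbf{P}_1}]|$ I would reduce to the DDH-GA assumption by exploiting the random self-reducibility of the action. Given a single DDH-GA instance $(\phi(x,g),\phi(y,g),C,D)$, the simulator can, for each Execute query and each cyclic index $i$, sample fresh $a_i,b_i\in H$ and re-randomise the challenge via the double action (composing with $\phi(a_i,\cdot)$ and $\phi(b_i,\cdot)$) to manufacture the two products $\phi(h_{i+1}\odot h_i,g)$ and $\phi(h_i\odot h_{i+1},g)$ needed in round $3$ and round $4$, without ever learning $x$ or $y$. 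This lets one DDH-GA instance simulate all $q_{ex}$ transcripts, so the DDH-GA term does not scale with $q_{ex}$; the factor $4$ then arises because the hybrid must be carried out separately in the two ``directions'' forced by non-commutativity (for $h_{i+1}\odot h_i$ and for $h_i\odot h_{i+1}$), each via the standard two-hybrid step.

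The step I expect to be the main obstacle is controlling the residual distributional mismatch created by the non-trivial stabiliser. In the DDH-GA definition, the ``random'' elements $z,r$ are drawn from $H\setminus\{y\odot x\,H_g,\;x\odot y\,H_g\}$ rather than from all of $H$, so the simulated round-$3$ and round-$4$ messages in $\mathbf{P}_1$ are uniform over $H(g)$ minus at most two points per Execute query, whereas in the real protocol they are uniform over the whole orbit. The statistical distance incurred per excluded coset is $|H_g|/|H|$; summed over the two directions and over $q_{ex}$ calls to Execute, this produces the additive error $\tfrac{2\,q_{ex}|H_g|}{|H|}$. Assembling the hybrid bound and this statistical loss yields the stated inequality, with $t'=t+\mathcal{O}(|\mathcal{P}|\cdot q_{ex}\cdot t_{\phi})$ accounting for the per-participant re-randomisations of the DDH-GA challenge performed by the simulator.
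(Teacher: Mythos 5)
Your high-level plan --- hybridize to a distribution in which the values $\Gamma_{i+1,i}=\phi(h_{i+1}\odot h_{i},g)$ are random, argue that the key is then information-theoretically hidden, and charge the hop to DDH-GA --- is the right skeleton, but the single hybrid $\mathbf{P}_0\to\mathbf{P}_1$ you describe cannot be realized by a reduction. Adjacent values $\Gamma_{i+1,i}$ and $\Gamma_{i+2,i+1}$ share the secret $h_{i+1}$, so a simulator embedding one challenge cannot switch \emph{all} of them between ``consistent'' and ``random'' while still being able to produce the neighbouring ones. The paper (following Katz--Yung) inserts an intermediate distribution $Fake^{'}$ in which only $\Gamma_{2,1},\Gamma_{3,2},\Gamma_{4,3}$ and every third subsequent $\Gamma_{j+1,j}$ are randomized; this breaks the cyclic chain of correlations so that a second reduction can then randomize the rest. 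That two-stage structure is where the factor $4$ actually comes from: two DDH-GA hops times the factor $2$ in $Adv=\left|2\cdot Pr\left[Succ\right]-1\right|$. Your accounting ($2$ ``directions'' $\times$ $2$ hybrids) does not match the protocol: only the ordering $h_{i+1}\odot h_{i}$ ever occurs, since $X_{i}=\phi(h_{i}\odot h_{i-1},g)=Y_{i-1}$. The second ordering $x\odot y$ enters only because a single challenge pair is embedded as two \emph{adjacent} secrets and hence appears in two adjacent $\Gamma$'s with opposite orderings --- that is why the DDH-GA tuple has four components, not why the bound has a $4$. Likewise, the statistical loss $\left|H_{g}\right|/\left|H\right|$ is incurred once per execution (for the single unblinded value $\Gamma_{3,2}=\phi(z,g)$), and the coefficient $2$ again comes from the advantage definition, not from two excluded cosets.

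Two further gaps. First, in your $\mathbf{P}_1$ the key is not simply ``a product of $n$ independent uniform elements'': the transcript reveals $Z_{i}=\Gamma_{i,i-1}^{-1}\cdot\Gamma_{i+1,i}$, which ties the $\Gamma$'s together and leaves only one degree of freedom among them. What must be shown (and what the paper shows) is that the key remains independent of $T$ \emph{conditionally} on those $n$ constraints; marginal uniformity is not enough. Second, the two-sided re-randomization you invoke ($x\mapsto x\odot a$ as well as $y\mapsto b\odot y$) is unavailable for a generic left action: from $\phi(y\odot x,g)$ and $a$ one cannot compute $\phi(y\odot x\odot a,g)$, since that would require evaluating $\phi(y\odot x,\cdot)$ at the point $\phi(a,g)$. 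Only left blinding is free, and the paper's simulated distributions $Dist^{'}$ and $Dist$ are engineered around exactly this constraint; the ``double action'' of Section 2 plays no role in the security proof. Your instinct to amortize one DDH-GA instance over all $q_{ex}$ Execute queries is actually what the stated bound requires (the paper's closing appeal to a hybrid over queries would instead yield $4q_{ex}\cdot Adv_{G}^{ddh-ga}$), but it too must be justified using one-sided blinding only.
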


\begin{proof}
Assume an adversary $\mathcal{A}$ making a single query $Execute(\mathcal{U}_{1}, \ldots, \mathcal{U}_{n})$, where the number of parties $n$ is chosen by the adversary. Let $n=3s+k$, where $k\in\left\{3,4,5\right\}$ and $s\geq 0$ is an integer. Considering an execution of the Protocol \ref{Prot1}, we define $\Gamma_{i+1,i}=\phi(h_{i+1}\odot h_{i},g)$ and note that
\begin{equation}
Z_{i}\equiv X_{i}^{-1}\cdot Y_{i}=\phi(h_{i}\odot h_{i-1},g)^{-1}\cdot\phi(h_{i+1}\odot h_{i},g)=\Gamma_{i,i-1}^{-1}\cdot\Gamma_{i+1,i}.
\end{equation}
Moreover, with the previous notation, the common session key is equal to
\begin{align*}
sk_{U}^{1} & =X_{1}\cdot(X_{1}\cdot Z_{1})\cdot(X_{1}\cdot Z_{1}\cdot Z_{2})\cdots(X_{1}\cdot Z_{1}\cdots Z_{n-2}\cdot Z_{n-1}), \\
           & =\Gamma_{1,n}\cdot(\Gamma_{1,n}\cdot Z_{1})\cdot(\Gamma_{1,n}\cdot Z_{1}\cdot Z_{2})\cdots(\Gamma_{1,n}\cdot Z_{1}\cdots Z_{n-2}\cdot Z_{n-3}),
\end{align*}
As a consequence, in a real execution of the protocol the distribution of the transcript $T$ and the resulting session key $sk$ is given by the following distribution, denoted as $Real$:
\begin{equation*}
Real\equiv\begin{cases} 
h_{1}, \ldots, h_{n}\leftarrow H; & \\
v_{1}=\phi(h_{1},g), v_{2}=\phi(h_{2},g), \ldots, v_{n}=\phi(h_{n},g); & \\
\Gamma_{1,n}=\phi(h_{1}\odot h_{n},g), \Gamma_{2,1}=\phi(h_{2}\odot h_{1},g), \ldots, \Gamma_{n,n-1}=\phi(h_{n}\odot h_{n-1},g); & \\
w_{1}=\phi(c_{n},\Gamma_{1,n}), w_{2}=\phi(c_{1},\Gamma_{2,1}), \ldots, w_{n}=\phi(c_{n-1},\Gamma_{n.n-1}); & \\
Z_{1}=\Gamma_{1,n}^{-1}\cdot\Gamma_{2,1}, Z_{2}=\Gamma_{2,1}^{-1}\cdot\Gamma_{3,2}, \ldots, Z_{n}=\Gamma_{n,n-1}^{-1}\cdot\Gamma_{1,n}; & \\
T=(v_{1},\ldots,v_{n},w_{1},\ldots,w_{n},Z_{1},\ldots,Z_{n}); & \\ 
sk=\Gamma_{1,n}\cdot(\Gamma_{1,n}\cdot Z_{1})\cdots(\Gamma_{1,n}\cdot Z_{1}\cdots Z_{n-2}\cdot Z_{n-3}); & \\
: (T,sk) &
\end{cases}
\end{equation*}
We next define a distribution $Fake^{'}$ in the following way: random elements $\left\{h_{i}\right\}$ as in the case of $Real$. The $\left\{v_{i}\right\}$ are also computed exactly as in $Real$. However, the values $\Gamma_{2,1}, \Gamma_{3,2}, \Gamma_{4,3}$, as well as every value $\Gamma_{j+1,j}$ with $j\geq 6$ a multiple of $3$, are now chosen uniformly at random from $G$ (recall $n=3s+5$ and $s\geq 1$).
\begin{equation*}
Fake^{'}\equiv\begin{cases} 
h_{1}, \ldots, h_{n}\leftarrow H; & \\
v_{1}=\phi(h_{1},g), v_{2}=\phi(h_{2},g), \ldots, v_{n}=\phi(h_{n},g); & \\
\Gamma_{2,1}, \Gamma_{3,2},\Gamma_{4,3}\leftarrow G; \Gamma_{5,4}=\phi(h_{5}\odot h_{4},g);  & \\
\text{for}\enspace i=1\enspace\text{to}\enspace s: & \\
\text{let}\enspace j=3i+3 &  \\
\Gamma_{j,j-1}=\phi(h_{j}\odot h_{j-1},g), \Gamma_{j+1,j}\leftarrow G, \Gamma_{j+2,j+1}=\phi(h_{j+2}\odot h_{j+1},g); & \\
\Gamma_{1,n}=\phi(h_{1}\cdot h_{n},g); & \\
w_{1}=\phi(c_{n},\Gamma_{1,n}), w_{2}=\phi(c_{1},\Gamma_{2,1}), \ldots, w_{n}=\phi(c_{n-1},\Gamma_{n.n-1}); & \\
Z_{1}=\Gamma_{1,n}^{-1}\cdot\Gamma_{2,1}, Z_{2}=\Gamma_{2,1}^{-1}\cdot\Gamma_{3,2}, \ldots, Z_{n}=\Gamma_{n,n-1}^{-1}\cdot\Gamma_{1,n}; & \\
T=(v_{1},\ldots,v_{n},w_{1},\ldots,w_{n},Z_{1},\ldots,Z_{n}); & \\
sk=\Gamma_{1,n}\cdot(\Gamma_{1,n}\cdot Z_{1})\cdots(\Gamma_{1,n}\cdot Z_{1}\cdots Z_{n-2}\cdot Z_{n-3}). & \\
: (T,sk) &
\end{cases}
\end{equation*}
\begin{claim}\label{Claim1}
For any algorithm $\mathcal{A}$ running in time $t$ we have:
\begin{equation*}
\left|Pr\left[(T,sk)\leftarrow Real: \mathcal{A}(T,sk)=1\right]-Pr\left[(T,sk)\leftarrow Fake^{'}: \mathcal{A}(T,sk)=1\right]\right|\leq\epsilon(t^{''})+\frac{\left|H_{g}\right|}{\left|H\right|},
\end{equation*}
where $\epsilon(\cdot)\equiv Adv_{G}^{ddh-ga}(\cdot)$.
\end{claim}

\begin{proof}
Given an algorithm $\mathcal{A}$, consider the following algorithm $D$ which takes as input a tuple ($x,y,z,r$)$\in H^{4}$: $D$ generates ($T,sk$) according to distribution $Dist^{'}$, runs $\mathcal{A}$($T,sk$), and outputs whatever $\mathcal{A}$ outputs. Distribution $Dist^{'}$ is defined as follows:
\begin{equation*}
Dist^{'}\equiv\begin{cases} 
\beta_{0},\beta_{0}^{'},h_{0},\left\{\beta_{i},\gamma_{i},h_{i}\right\}_{i=1}^{s}\leftarrow H; & \\
v_{1}=\phi(y\odot\beta_{0},g), v_{2}=\phi(x,g), v_{3}=\phi(y,g), v_{4}=\phi(\beta_{0}^{'}\odot x,g), v_{5}=\phi(h_{0},g); & \\
\Gamma_{2,1}=\phi(r\odot \beta_{0},g), \Gamma_{3,2}=\phi(z,g),\Gamma_{4,3}=\phi(\beta_{0}^{'}\odot r,g), \Gamma_{5,4}=\phi(h_{0},v_{4});  & \\
\text{for}\enspace i=1\enspace\text{to}\enspace s: & \\
\text{let}\enspace j=3i+3 &  \\
v_{j}=\phi(x\odot\gamma_{i},g), v_{j+1}=\phi(\beta_{i}\odot y,g), v_{j+2}=\phi(h_{i},g); & \\
\Gamma_{j,j-1}=\phi(x\odot\gamma_{i}\odot h_{i-1},g), \Gamma_{j+1,j}=\phi(\beta_{i}\odot z\odot\gamma_{i},g), \Gamma_{j+2,j+1}=\phi(h_{i},v_{j+1}); & \\
\Gamma_{1,n}=\phi(h_{s},v_{1}); & \\
w_{1}=\phi(c_{n},\Gamma_{1,n}), w_{2}=\phi(c_{1},\Gamma_{2,1}), \ldots, w_{n}=\phi(c_{n-1},\Gamma_{n,n-1}); & \\
Z_{1}=\Gamma_{1,n}^{-1}\cdot\Gamma_{2,1}, Z_{2}=\Gamma_{2,1}^{-1}\cdot\Gamma_{3,2}, \ldots, Z_{n}=\Gamma_{n,n-1}^{-1}\cdot\Gamma_{1,n}; & \\
T=(v_{1},\ldots,v_{n},w_{1},\ldots,w_{n},Z_{1},\ldots,Z_{n}); & \\ 
sk=\Gamma_{1,n}\cdot(\Gamma_{1,n}\cdot Z_{1})\cdots(\Gamma_{1,n}\cdot Z_{1}\cdots Z_{n-2}\cdot Z_{n-3}). & \\
: (T,sk). &
\end{cases}
\end{equation*}
We first examine the above distribution when ($x,y,z,r$) is chosen uniformly at random from the set of tuples ($x,y,z=y\odot x,r=x\odot y$); we refer to the resulting distribution as $Dist^{'}_{ddh-ga}$:
\begin{equation*}
Dist_{ddh-ga}^{'}\equiv\begin{cases} 
\beta_{0},\beta_{0}^{'},h_{0},\left\{\beta_{i},\gamma_{i},h_{i}\right\}_{i=1}^{s}\leftarrow H; & \\
v_{1}=\phi(y\odot\beta_{0},g), v_{2}=\phi(x,g), v_{3}=\phi(y,g), v_{4}=\phi(\beta_{0}^{'}\odot x,g), v_{5}=\phi(h_{0},g); & \\
\Gamma_{2,1}=\phi(x\odot y\odot \beta_{0},g), \Gamma_{3,2}=\phi(y\odot x,g),\Gamma_{4,3}=\phi(\beta_{0}^{'}\odot x\odot y,g), \Gamma_{5,4}=\phi(h_{0},v_{4});  & \\
\text{for}\enspace i=1\enspace\text{to}\enspace s: & \\
\text{let}\enspace j=3i+3 & \\
v_{j}=\phi(x\odot\gamma_{i},g), v_{j+1}=\phi(\beta_{i}\odot y,g), v_{j+2}=\phi(h_{i},g); & \\
\Gamma_{j,j-1}=\phi(x\odot\gamma_{i}\odot h_{i-1},g), \Gamma_{j+1,j}=\phi(\beta_{i}\odot y\odot x\odot \gamma_{i},g), \Gamma_{j+2,j+1}=\phi(h_{i},v_{j+1}); & \\
\Gamma_{1,n}=\phi(h_{s},v_{1}); & \\
w_{1}=\phi(c_{n},\Gamma_{1,n}), w_{2}=\phi(c_{1},\Gamma_{2,1}), \ldots, w_{n}=\phi(c_{n-1},\Gamma_{n,n-1}); & \\
Z_{1}=\Gamma_{1,n}^{-1}\cdot\Gamma_{2,1}, Z_{2}=\Gamma_{2,1}^{-1}\cdot\Gamma_{3,2}, \ldots, Z_{n}=\Gamma_{n,n-1}^{-1}\cdot\Gamma_{1,n}; & \\
T=(v_{1},\ldots,v_{n},w_{1},\ldots,w_{n},Z_{1},\ldots,Z_{n}); & \\
sk=\Gamma_{1,n}\cdot(\Gamma_{1,n}\cdot Z_{1})\cdots(\Gamma_{1,n}\cdot Z_{1}\cdots Z_{n-2}\cdot Z_{n-3}). & \\
: (T,sk),
\end{cases}
\end{equation*}
where all we have done is substitute $z$ and $r$ into the definition of $Dist^{'}$. We claim that $Dist^{'}_{ddh-ga}$ is identical to $Real$. To see this, first notice that $\left\{Z_{i}\right\}$, $T$, and $sk$ are computed identically in both, so it suffices to look at the distribution of the $v's$, the $w's$, and the $\Gamma's$. It is not hard to see that in $Dist^{'}_{ddh-ga}$ the $\left\{v_{k}\right\}_{k=1}^{n}$ and the $\left\{w_{k}\right\}_{k=1}^{n}$ are uniformly and independently distributed in $G$, exactly as in $Real$. It remains to show that for all $k\in\left\{1,\ldots,n\right\}$, the tuple $\left\{v_{k},v_{k+1},\Gamma_{k+1,k}\right\}$ in $Dist^{'}_{ddh-ga}$ is a Diffie-Hellman group action tuple as in $Real$, what can be verified by computation. We conclude that:
\begin{equation}\label{Eq3}
Pr\left[x,y\leftarrow H: D(x,y,y\odot x,x\odot y)=1\right]=Pr\left[(T,sk)\leftarrow Real: \mathcal{A}(T,sk)=1\right].
\end{equation}
We now examine distribution $Dist^{'}$ in case ($x,y,z,r$) is chosen uniformly from the space of random tuples, refereeing to it as $Dist^{'}_{rand}$:
\begin{equation*}
Dist^{'}_{rand}\equiv\begin{cases} 
\beta_{0},\beta_{0}^{'},h_{0},\left\{\beta_{i},\gamma_{i},h_{i}\right\}_{i=1}^{s}\leftarrow H; & \\
v_{1}=\phi(y\odot\beta_{0},g), v_{2}=\phi(x,g), v_{3}=\phi(y,g), v_{4}=\phi(\beta_{0}^{'}\odot x,g), v_{5}=\phi(h_{0},g); & \\
\Gamma_{2,1}=\phi(r\odot \beta_{0},g), \Gamma_{3,2}=\phi(z,g),\Gamma_{4,3}=\phi(\beta_{0}^{'}\odot r,g), \Gamma_{5,4}=\phi(h_{0},v_{4});  & \\
\text{for}\enspace i=1\enspace\text{to}\enspace s: & \\
\text{let}\enspace j=3i+3 &  \\
v_{j}=\phi(x\odot\gamma_{i},g), v_{j+1}=\phi(\beta_{i}\odot y,g), v_{j+2}=\phi(h_{i},g); & \\
\Gamma_{j,j-1}=\phi(x\odot\gamma_{i}\odot h_{i-1},g), \Gamma_{j+1,j}=\phi(\beta_{i}\odot z\odot\gamma_{i},g), \Gamma_{j+2,j+1}=\phi(h_{i},v_{j+1}); & \\
\Gamma_{1,n}=\phi(h_{s},v_{1}); & \\
w_{1}=\phi(c_{n},\Gamma_{1,n}), w_{2}=\phi(c_{1},\Gamma_{2,1}), \ldots, w_{n}=\phi(c_{n-1},\Gamma_{n,n-1}); & \\
Z_{1}=\Gamma_{1,n}^{-1}\cdot\Gamma_{2,1}, Z_{2}=\Gamma_{2,1}^{-1}\cdot\Gamma_{3,2}, \ldots, Z_{n}=\Gamma_{n,n-1}^{-1}\cdot\Gamma_{1,n}; & \\
T=(v_{1},\ldots,v_{n},w_{1},\ldots,w_{n},Z_{1},\ldots,Z_{n}); & \\
sk=\Gamma_{1,n}\cdot(\Gamma_{1,n}\cdot Z_{1})\cdots(\Gamma_{1,n}\cdot Z_{1}\cdots Z_{n-2}\cdot Z_{n-3}). & \\
: (T,sk)
\end{cases}
\end{equation*}
We claim that $Dist^{'}_{rand}$ is statistically close to $Fake^{'}$. As before, we only need to examine the distribution of the $v's$, the $w's$ and the $\Gamma's$. It is again not hard to see that in $Dist^{'}_{rand}$ the $\left\{v_{k}\right\}$ and the $\left\{w_{k}\right\}$ are uniformly and independently distributed in $G$, as in $Fake^{'}$. By inspection, we see also that the $\Gamma_{5,4}$, $\left\{\Gamma_{j,j-1}, \Gamma_{j+2,j+1}\right\}_{j=3i+3; i=1,\ldots,s}$, and $\Gamma_{1,n}$ are distributed identically in $Dist^{'}_{rand}$ and $Fake^{'}$. It only remains to show that in $Dist^{'}_{rand}$, the distribution of the remaining $\Gamma's$ is statistically close to the uniform one. We take $\Gamma_{j+1,j}$(for some $i\in\left\{1,\ldots,s\right\}$ and $j=3i+3$) as a representative example. Since $\beta_{i}$ is used only in computing $v_{j+1}$ and $\Gamma_{j+1,j}$, the joint distribution of $v_{j+1}, \Gamma_{j+1,j}$ is given by:
\begin{align*}
v_{j+1} & =\phi(\beta_{i}\odot y,g), \\
\Gamma_{j+1,j} & =\phi(\beta_{i}\odot z\odot\gamma_{i},g),
\end{align*}
where $\beta_{i},\gamma_{i}$ are chosen uniformly and independently from $H$. Using the fact that $z\notin y\odot xH_{g}$, the above relations are independent, and it follows that $\Gamma_{j+1,j}$ is uniformly distributed in $G$, independent of $v_{j+1}$ and everything else. A similar argument applies for $\Gamma_{2,1}$ and $\Gamma_{4,3}$. The only value left is $\Gamma_{3,2}$. Because of our restriction on the choice of $z$, the value $h\in H$ verifying $\phi(h,g)=\Gamma_{3,2}$ in $Dist^{'}_{rand}$ is distributed uniformly in $H\setminus y\odot xH_{g}$. This is statistically close (within a factor of $\frac{\left|H_{g}\right|}{\left|H\right|}$) to the distribution of $h^{'}\in H$ with $\phi(h^{'},g)=\Gamma_{3,2}$ in $Fake^{'}$. We conclude that:
\begin{multline}\label{Eq4}
Pr\left[x,y\leftarrow H; z,r\leftarrow H\setminus\left\{y\odot xH_{g}, x\odot yH_{g}\right\}: D(x,y,z,r)=1\right]\geq \\ Pr\left[(T,sk)\leftarrow Fake^{'}: \mathcal{A}(T,sk)=1\right]-\frac{\left|H_{g}\right|}{\left|H\right|}.
\end{multline}
The running time of $D$ if $t^{''}$, the running time of $\mathcal{A}$ plus $\mathcal{O}(n)$ computations of $\phi(h,g)$. The claim now follows easily from Equations \ref{Eq3} and $\ref{Eq4}$ and the definition of $\epsilon$.  
\end{proof}
We now introduce one final distribution in which all the $\Gamma's$ are chosen uniformly and independently at random from $G$:
\begin{equation*}
Fake\equiv\begin{cases} 
h_{1}, \ldots, h_{n}\leftarrow H; & \\
v_{1}=\phi(h_{1},g), v_{2}=\phi(h_{2},g), \ldots, v_{n}=\phi(h_{n},g); & \\
\Gamma_{2,1},\ldots,\Gamma_{n,n-1},\Gamma_{1,n}\leftarrow G;  &  \\
w_{1}=\phi(c_{n},\Gamma_{1,n}), w_{2}=\phi(c_{1},\Gamma_{2,1}), \ldots, w_{n}=\phi(c_{n-1},\Gamma_{n.n-1}); & \\
Z_{1}=\Gamma_{1,n}^{-1}\cdot\Gamma_{2,1}, Z_{2}=\Gamma_{2,1}^{-1}\cdot\Gamma_{3,2}, \ldots, Z_{n}=\Gamma_{n,n-1}^{-1}\cdot\Gamma_{1,n}; & \\
T=(v_{1},\ldots,v_{n},w_{1},\ldots,w_{n},Z_{1},\ldots,Z_{n}); & \\
sk=\Gamma_{1,n}\cdot(\Gamma_{1,n}\cdot Z_{1})\cdots(\Gamma_{1,n}\cdot Z_{1}\cdots Z_{n-2}\cdot Z_{n-3}). & \\
: (T,sk) &
\end{cases}
\end{equation*}
\begin{claim}\label{Claim2}
For any algorithm $\mathcal{A}$ running in time $t$ we have:
\begin{equation*}
\left|Pr\left[(T,sk)\leftarrow Fake^{'}: \mathcal{A}(T,sk)=1\right]-Pr\left[(T,sk)\leftarrow Fake: \mathcal{A}(T,sk)=1\right]\right|\leq\epsilon(t^{''}).
\end{equation*}
\end{claim}
\begin{proof}
Given an algorithm $\mathcal{A}$, consider the following algorithm $D$ which takes as input a tuple ($x,y,z,r$)$\in H^{4}$: $D$ generates ($T,sk$) according to distribution $Dist$, runs $\mathcal{A}(T,sk)$, and outputs whatever $\mathcal{A}$ outputs. Distribution $Dist$ is defined as follows:
\begin{equation*}
Dist\equiv\begin{cases} 
h_{1},h_{2},\left\{\beta_{i},\beta_{i}^{'},\gamma_{i}\right\}_{i=0}^{s}\leftarrow H; & \\
v_{1}=\phi(y\odot\beta_{0},g), v_{2}=\phi(h_{1},g), v_{3}=\phi(h_{2},g), v_{4}=\phi(y\odot\beta_{0}^{'},g), v_{5}=\phi(\gamma_{0}\odot x,g); & \\
\Gamma_{2,1}, \Gamma_{3,2},\Gamma_{4,3}\leftarrow G; \Gamma_{5,4}=\phi(\gamma_{0}\odot r\odot\beta_{0}^{'},g);  & \\
\text{for}\enspace i=1\enspace\text{to}\enspace s: & \\
\text{let}\enspace j=3i+3 &  \\
v_{j}=\phi(\beta_{i}\odot y\odot\gamma_{i-1}^{-1},g), v_{j+1}=\phi(y\odot\beta_{i}^{'},g), v_{j+2}=\phi(\gamma_{i}\odot x,g); &  \\
\Gamma_{j,j-1}=\phi(\beta_{i}\odot z,g), \Gamma_{j+1,j}\leftarrow G, \Gamma_{j+2,j+1}=\phi(\gamma_{i}\odot r\odot\beta_{i}^{'},g); & \\
\Gamma_{1,n}=\phi(\gamma_{s}\odot t\odot\beta_{0},g); & \\
w_{1}=\phi(c_{n},\Gamma_{1,n}), w_{2}=\phi(c_{1},\Gamma_{2,1}), \ldots, w_{n}=\phi(c_{n-1},\Gamma_{n,n-1}); & \\
Z_{1}=\Gamma_{1,n}^{-1}\cdot\Gamma_{2,1}, Z_{2}=\Gamma_{2,1}^{-1}\cdot\Gamma_{3,2}, \ldots, Z_{n}=\Gamma_{n,n-1}^{-1}\cdot\Gamma_{1,n}; & \\
T=(v_{1},\ldots,v_{n},w_{1},\ldots,w_{n},Z_{1},\ldots,Z_{n}); & \\
sk=\Gamma_{1,n}\cdot(\Gamma_{1,n}\cdot Z_{1})\cdots(\Gamma_{1,n}\cdot Z_{1}\cdots Z_{n-2}\cdot Z_{n-3}). & \\
: (T,sk) &
\end{cases}
\end{equation*}
The remainder of the proof is very similar to the proof of the previous claim. We first examine the above distribution when ($x,y,z=y\odot x,r=x\odot y$) is chosen uniformly from the set of Diffie-Hellman tuples, and we refer to the resulting distribution as $Dist_{ddh-ga}$:
\begin{equation*}
Dist_{ddh-ga}\equiv\begin{cases} 
h_{1},h_{2},\left\{\beta_{i},\beta_{i}^{'},\gamma_{i}\right\}_{i=0}^{s}\leftarrow H; & \\
v_{1}=\phi(y\odot\beta_{0},g), v_{2}=\phi(h_{1},g), v_{3}=\phi(h_{2},g), v_{4}=\phi(y\odot\beta_{0}^{'},g), v_{5}=\phi(\gamma_{0}\odot x,g); & \\
\Gamma_{2,1}, \Gamma_{3,2},\Gamma_{4,3}\leftarrow G; \Gamma_{5,4}=\phi(\gamma_{0}\odot x\odot y\odot\beta_{0}^{'},g);  & \\
\text{for}\enspace i=1\enspace\text{to}\enspace s: & \\
\text{let}\enspace j=3i+3 &  \\
v_{j}=\phi(\beta_{i}\odot y\odot\gamma_{i-1}^{-1},g), v_{j+1}=\phi(y\odot\beta_{i}^{'},g), v_{j+2}=\phi(\gamma_{i}\odot x,g); &  \\
\Gamma_{j,j-1}=\phi(\beta_{i}\odot y\odot x,g), \Gamma_{j+1,j}\leftarrow G, \Gamma_{j+2,j+1}=\phi(\gamma_{i}\odot x\odot y\odot\beta_{i}^{'},g); & \\
\Gamma_{1,n}=\phi(\gamma_{s}\odot x\odot y\odot\beta_{0},g); & \\
w_{1}=\phi(c_{n},\Gamma_{1,n}), w_{2}=\phi(c_{1},\Gamma_{2,1}), \ldots, w_{n}=\phi(c_{n-1},\Gamma_{n,n-1}); & \\
Z_{1}=\Gamma_{1,n}^{-1}\cdot\Gamma_{2,1}, Z_{2}=\Gamma_{2,1}^{-1}\cdot\Gamma_{3,2}, \ldots, Z_{n}=\Gamma_{n,n-1}^{-1}\cdot\Gamma_{1,n}; & \\
T=(v_{1},\ldots,v_{n},w_{1},\ldots,w_{n},Z_{1},\ldots,Z_{n}); & \\ 
sk=\Gamma_{1,n}\cdot(\Gamma_{1,n}\cdot Z_{1})\cdots(\Gamma_{1,n}\cdot Z_{1}\cdots Z_{n-2}\cdot Z_{n-3}). & \\
: (T,sk) &
\end{cases}
\end{equation*}
We claim that $Dist_{ddh-ga}$ is identical to $Fake^{'}$. As in the proof of the previous claim, we need only focus on the $v's$, $w's$ and the $\Gamma's$. It is again easy to see that in $Dist_{ddh-ga}$ the $\left\{v_{k}\right\}$ and the $\left\{w_{k}\right\}$ are uniformly and independently distributed in $G$, as in $Fake^{'}$. It is also immediate that in both distributions we have that $\Gamma_{2,1}, \Gamma_{3,2}, \Gamma_{4,3}$ and $\left\{\Gamma_{j+1,j}\right\}_{j=3i+3; i=1,\ldots, s}$ are distributed uniformly and independently in $G$. As for the remaining $\Gamma's$, we consider $\Gamma_{j,j-1}$ as a representative example. We claim that in $Dist_{ddh-ga}$ the tuple ($v_{j},v_{j-1},\Gamma_{j,j-1}$) is a Diffie-Hellman group action tuple, as in the case of $Fake^{'}$. Thus, we conclude that:
\begin{equation}\label{Eq5}
Pr\left[x,y\leftarrow H: D(x,y,y\odot x,x\odot y)=1\right]=Pr\left[(T,sk)\leftarrow Fake^{'}: \mathcal{A}(T,sk)=1\right].
\end{equation}
We now examine distribution $Dist$ when ($x,y,z,r$) is chosen uniformly from the set of random tuples, and we refer to the resulting distribution as $Dist_{rand}$:
\begin{equation*}
Dist_{rand}\equiv\begin{cases} 
h_{1},h_{2},\left\{\beta_{i},\beta_{i}^{'},\gamma_{i}\right\}_{i=0}^{s}\leftarrow H; & \\
v_{1}=\phi(y\odot\beta_{0},g), v_{2}=\phi(h_{1},g), v_{3}=\phi(h_{2},g), v_{4}=\phi(y\odot\beta_{0}^{'},g), v_{5}=\phi(\gamma_{0}\odot x,g); & \\
\Gamma_{2,1}, \Gamma_{3,2},\Gamma_{4,3}\leftarrow G; \Gamma_{5,4}=\phi(\gamma_{0}\odot r\odot\beta_{0}^{'},g);  & \\
\text{for}\enspace i=1\enspace\text{to}\enspace s: & \\
\text{let}\enspace j=3i+3 &  \\
v_{j}=\phi(\beta_{i}\odot y\odot\gamma_{i-1}^{-1},g), v_{j+1}=\phi(y\odot\beta_{i}^{'},g), v_{j+2}=\phi(\gamma_{i}\odot x,g); &  \\
\Gamma_{j,j-1}=\phi(\beta_{i}\odot z,g), \Gamma_{j+1,j}\leftarrow G, \Gamma_{j+2,j+1}=\phi(\gamma_{i}\odot r\odot\beta_{i}^{'},g); & \\
\Gamma_{1,n}=\phi(\gamma_{s}\odot t\odot\beta_{0},g); & \\
w_{1}=\phi(c_{n},\Gamma_{1,n}), w_{2}=\phi(c_{1},\Gamma_{2,1}), \ldots, w_{n}=\phi(c_{n-1},\Gamma_{n,n-1}); & \\
Z_{1}=\Gamma_{1,n}^{-1}\cdot\Gamma_{2,1}, Z_{2}=\Gamma_{2,1}^{-1}\cdot\Gamma_{3,2}, \ldots, Z_{n}=\Gamma_{n,n-1}^{-1}\cdot\Gamma_{1,n}; & \\
T=(v_{1},\ldots,v_{n},w_{1},\ldots,w_{n},Z_{1},\ldots,Z_{n}); & \\
sk=\Gamma_{1,n}\cdot(\Gamma_{1,n}\cdot Z_{1})\cdots(\Gamma_{1,n}\cdot Z_{1}\cdots Z_{n-2}\cdot Z_{n-3}). & \\
: (T,sk) & 
\end{cases}
\end{equation*}
We claim that $Dist_{rand}$ is identical to $Fake$. It is easy to verify that in $Dist_{rand}$ the $\left\{v_{k}\right\}$ and the $\left\{w_{k}\right\}$ are uniformly and independently distributed in $G$. We need to show that the $\Gamma's$ are also uniformly distributed, independent of each other and the $v's$, as in the case in $Fake$. This clearly holds for $\Gamma_{2,1}, \Gamma_{3,2}, \Gamma_{4,3}$ and $\left\{\Gamma_{j+1,j}\right\}_{j=3i+3; i=1,\ldots, s}$. For the remaining $\Gamma's$, we take $\Gamma_{5,4}$ as a representative example. As $\beta_{0}^{'}$ is used only in computing $v_{4}$ and $\Gamma_{5,4}$, the joint distribution of $v_{4}$ and $\Gamma_{5,4}$ is given by:
\begin{align*}
v_{4} & =\phi(y\odot\beta_{0}^{'},g), \\
\Gamma_{5,4} & =\phi(\gamma_{0}\odot r\odot\beta_{0}^{'},g),
\end{align*}
where $\beta_{0}^{'},\gamma_{0}$ are uniformly and independently distributed in $H$. Using the fact that $r\notin x\odot yH_{g}$, the above distributions are independent. It follows that $\Gamma_{5,4}$ is uniformly distributed in $G$, independent of $v_{4}$ and everything else. As a result, we conclude that:
\begin{multline}\label{Eq6}
Pr\left[x,y\leftarrow H; z,r\leftarrow H\setminus\left\{y\odot xH_{g},x\odot yH_{g}\right\} :D(x,y,z,r)=1\right]= \\ Pr\left[(T,sk)\leftarrow Fake: \mathcal{A}(T,sk)=1\right].
\end{multline}
The running time of $D$ is $t^{''}$. The claim now follows readily from Equation \ref{Eq5} and $\ref{Eq6}$.
\end{proof}
In experiment, $Fake$, let $s_{i+1,i}\in H$ such that $\phi(s_{i+1,i},g)=\Gamma_{i+1,i}$ for $1\leq i\leq n$. Given $T$, the values $s_{2,1},\ldots,s_{1,n}$ are constrained by the following $n$ relations:
\begin{align*}
Z_{1} & =\phi(s_{1,n},g)^{-1}\cdot\phi(s_{2,1},g), \\
\vdots & =\vdots \\
Z_{n} & =\phi(s_{n,n-1},g)^{-1}\cdot\phi(s_{1,n},g).																				
\end{align*}
Furthermore, we have that $sk=\phi(s_{1,n},g)\cdot\phi(s_{2,1},g)\cdots\phi(s_{n,n-1},g)$, so we conclude that $sk$ is independent of $T$. This implies that even for a computationally-unbounded adversary $\mathcal{A}$:
\begin{equation*}
Pr\left[(T,sk_{0})\leftarrow Fake; sk_{1}\leftarrow G; b\leftarrow\left\{0,1\right\}: \mathcal{A}(T,sk_{b})=b\right]=\frac{1}{2}.
\end{equation*}
By combining this with Claim \ref{Claim1} and Claim \ref{Claim2}, we obtain that 
\begin{equation*}
Adv_{P}^{KE-fs}(t,1)\leq 4\cdot\epsilon(t^{''})+\frac{2\left|H_{g}\right|}{\left|H\right|}.
\end{equation*}
For the case $q_{ex}>1$, a standard hybrid argument shows that $Adv_{P}^{KE-fs}(t,q_{ex})\leq q_{ex}\cdot Adv_{P}^{KE-fs}(t,1)$ giving the result of the theorem.
\end{proof}

\section{Some group actions related to hard group-theoretic problems}\label{sec6}
In this section, we explore some finite group actions associated to intractable group-theoretic problems and present the corresponding group key obtained by running Protocol \ref{Prot1}.

\begin{example}[Conjugacy Problem]
Let $(H,\cdot)\leq (G,\cdot)$ be a subgroup and $\phi: H\times G\rightarrow G$ the action of $H$ on $G$ defined by $\phi(h_{i},g)=h_{i}^{-1}\cdot g\cdot h_{i}$. \\
This action belongs to a bigger family of actions that are introduced in the following definition.
\begin{definition}\cite[Ch.3, p.68]{Isaacs08}
Given $(G,\cdot)$ finite group and $(H,\cdot)$ a subgroup, we say that $H$ acts via automorphisms on $G$ if $H$ acts on $G$ as a set, and in addition, $\phi(h,g_{1}\cdot g_{2})=\phi(h,g_{1})\cdot \phi(h,g_{2})$ for all $g_{1},g_{2}\in G$ and $h\in H$.
\end{definition}
Note that if $H$ acts via automorphisms on $G$ then we have that $\phi(h,e_{G})=e_{G}$, so $\phi(h,g)^{-1}=\phi(h,g^{-1})$ for all $g\in G$ and $h\in H$.

The Conjugacy Decisional Problem in a given group $(G,\cdot)$, consist of given the elements $a,b\in G$ determine whether the equality $b=x^{-1}\cdot a \cdot x$ holds for some $x\in G$, and the associated Conjugacy Search Problem to find such $x\in G$. The conjugacy search problem is of interest in complexity theory. Indeed, if you know that $b$ is conjugate to $a$, you can just go over words of the form $x^{-1}\cdot a \cdot x$ and
compare them to $b$ one at a time, until you get a match. This straightforward algorithm is at least exponential-time in the length of $b$, and therefore is considered infeasible for practical purposes.
Thus, if no other algorithm is known for the conjugacy search problem in a group $G$, it is not unreasonable to claim that $x\rightarrow x^{-1}\cdot a \cdot x$ is a one-way function and try to build a (public-key) cryptographic protocol on that. The computational difficulty of this problem in appropriately selected
groups has enabled it to be used in various public-key protocols as the Anshel-Anshel-Goldfeld (AAG) key exchange protocol that is based on the
conjugacy search problem on non-commutative groups \cite{Anshel99}. However, it is now well-known that the conjugacy search problem is unlikely to provide sufficient security level if the platform is a braid group as it was originally proposed (see e.g. \cite{Myasnikov05}, \cite{Myasnikov06}). In fact, electing a suitable platform group for the above protocol is a very non-trivial matter; some requirements on such a group were put forward in \cite{Myasnikov11}: i) the group has to be well known, ii) the word problem in $G$ should have a fast (linear- or quadratic-time) solution by a deterministic algorithm, iii) the conjugacy search problem should not have a subexponential-time solution by a deterministic algorithm, and iv) $G$ should be a group of super-polynomial (i.e., exponential or ``intermediate'') growth.

%We note that there is no known polynomial time algorithm for solving the conjugacy search problem in an arbitrary group with small cancellation conditions $C(4)$ and $T(4)$, but not $C^{'}(\frac{1}{6})$ in order to try to avoid hyperbolic groups (all finitely presented $C^{'}(\frac{1}{6})$ groups are hyperbolic), where the conjugacy search problem can be solved very quickly.

In this particular case, we have that after running Protocol \ref{Prot1}
\begin{equation*}
sk_{U}^{i}=h_{1}^{-1}\cdot h_{n}^{-1}\cdot g\cdot h_{n}\cdot h_{1}\cdots h_{n}^{-1}\cdot h_{n-1}^{-1}\cdot g\cdot h_{n-1}\cdot h_{n}.
\end{equation*}
\end{example}

\begin{example}[Twisted Conjugacy Problem]
Let $(H,\cdot)\leq (G,\cdot)$ be a subgroup, $\varphi$ an endomorphism defined on $G$, and $\phi: H\times G\rightarrow G$ the action of $H$ on $G$ defined by $\phi(h_{i},g)=h_{i}^{-1}\cdot g\cdot\varphi(h_{i})$. 

Given an endomorphism $\varphi: G\rightarrow G$, and the elements $a,b\in G$, the Twisted Conjugacy Decisional Problem consist of determining whether the equality $b=x^{-1}\cdot a\cdot\varphi(x)$ holds for some $x\in G$, and the associated Twisted Conjugacy Search Problem to find such $x\in G$. This problem is a generalization of the conjugacy problem (note that when the endomorphism $\varphi=id$ is taken, then the conjugacy problem is obtained) and it has been used in an authentication protocol in \cite{Shpilrain08} but this was successfully attacked in \cite{Grassl09}. Moreover, a key exchange protocol was proposed using the Twisted Conjugacy Search problem implemented in the discrete Heisenberg group in \cite{Isaiyarasi12}. 

The shared key resulting after running Protocol \ref{Prot1} is
\begin{equation*}
sk_{U}^{i}=h_{1}^{-1}\cdot h_{n}^{-1}\cdot g\cdot\varphi(h_{n}\cdot h_{1})\cdots h_{n}^{-1}\cdot h_{n-1}^{-1}\cdot g\cdot\varphi(h_{n-1}\cdot h_{n}).
\end{equation*}
\end{example}

\begin{example}[Double Coset Membership Problem]
Let $(H,\cdot)\leq (G,\cdot)$ and $(J,\cdot)$ be two non-solvable subgroups, and $\psi: (H,J)\times G\rightarrow G$ the double action of ($H,J$) on $G$ defined by $\psi((h_{i},j_{i}),g)=\phi(h_{i},\gamma(j_{i},g))=h_{i}\cdot g\cdot j_{i}$. \\

Given two subgroups $H,J$ and two elements $a,b\in G$ the Double Coset Membership Decisional Problem consist of deciding whether the double coset $HaJ$ contains the
element $b$, and the associated Double Coset Membership Search Problem to find elements $h\in H$ and $j\in J$such that $b=haj$. This problem was used in a key establishment protocol in \cite{Shpilrain06}. The Triple Decomposition Problem, that is an slight modification of the`previous one, was used as well in order to get a shared secret key in \cite{Isaiyarasi12}. In the double coset problem by considering $a=e_{G}$, you get the factorization problem. Although Fenner and Zhang proved in \cite[Corollary 2.]{Fenner05} that the Double Coset Membership over solvable groups can be solved within error $\epsilon$ by a quantum algorithm that runs in time polynomial in $m+log(1//epsilon)$, where $m$ is the size of the input, provided one of the underlying groups is smoothly solvable. However, there is not an analogue algorithm when the two subgroups $H$ and $J$ are non-solvable.

In this particular case we have that the shared key obtained after running Protocol \ref{Prot1} is
\begin{equation*}
sk_{U}^{i}=h_{1}\cdot h_{n}\cdot g\cdot j_{n}\cdot j_{1}\cdots h_{n}\cdot h_{n-1}\cdot g\cdot j_{n-1}\cdot j_{n}.
\end{equation*}
\end{example}

\section{Some final remarks}

\begin{remark}
Note that by considering a cyclic subgroup, the AAG key exchange protocol could be used to establish the shared secret key $c_{i}$ between $U_{i}$ and $U_{i+1}$ in $Round 1$ of Protocol \ref{Prot1}.
\end{remark}

\begin{remark}
If $G$ is quasi-commutative, that is, for every $a, b\in G$, there is a positive integer $r$ such that $ab=b^{r}a=ba^{r}$ (see e.g \cite{Nagy01}), then we can rid the protocol of Rounds $1$ and $3$.
\end{remark}

\begin{remark}
There are some reasons to consider the family of groups with small cancellation conditions $C(4)$ and $T(4)$, and $C^{'}(\lambda)$ with $\lambda\neq\frac{1}{6}$ (see, for example, the monograph \cite{Lyndon01} for an intensive treatment of small cancellation theory) as platform groups for the actions mentioned in Section \ref{sec6}. To begin with there is no known polynomial time algorithm for solving the conjugacy search problem in an arbitrary group of type $C(4)$ and $T(4)$, but not $C^{'}(\frac{1}{6})$ (all finitely presented $C^{'}(\frac{1}{6})$ groups are hyperbolic so the conjugacy search problem can be solved very quickly). Moreover, in \cite{Ger91} Gersten and Short proved that all groups which satisfy the algebraic $C(4)$ and $T(4)$ conditions are automatic. This is an important fact, as automatic groups provide a method for carrying out practical computations such as the growth function of the group or the reduction of words to normal form rapidly, being all these desirable properties when selecting platform groups (see \cite{Myasnikov11}). Finally, the structure of its subgroups has been well studied (see \cite{Mosalamy86} and \cite{duda23}).
\end{remark}

%%%%%%%%%%%%%%%%%%%%%%%%%%%%%%%%%%%%%%%%%%%%%%%%%%%%%%%%%%%%%%%%%%%%%%%

\bigskip

\bibliographystyle{plain}
\bibliography{BDNC}% common bib file
%% if required, the content of .bbl file can be included here once bbl is generated
%%\input sn-article.bbl

\end{document}